\let\IEEEproof\proof
\let\IEEEendproof\endproof
\let\proof\@undefined
\let\endproof\@undefined
\let\proof\IEEEproof
\let\endproof\IEEEendproof
\begin{document}

\title{\Large\bfseries Sufficient Conditions for Existence of $J_{\alpha}(X + \sqrt[\alpha]{\eta}N)$}
 
\author{
  \authorblockN{Jihad Fahs, Ibrahim Abou-Faycal} \\
  \authorblockA{Dept.\ of Elec.\ and Comp.\ Eng.,
    American University of Beirut \\
    Beirut 1107 2020, Lebanon \\
    {\tt \{jjf03, Ibrahim.Abou-Faycal\}@aub.edu.lb}}
}

\maketitle

\begin{abstract}
In his technical report~\cite[sec. 6]{barrontech}, Barron states that the de Bruijn's identity for Gaussian perturbations holds for any RV having a finite variance. In this report, we follow Barron's steps as we prove the existence of $J_{\alpha}\left(X + \sqrt[\alpha]{\eta}N\right)$, $\eta > 0$ for any Radom Variable (RV) $X \in \mathcal{L}$ where 
\begin{equation*}
 \mathcal{L} = \left\{ \text{RVs} \,\,U: \int \ln\left(1 + |U|\right)\,dF_{U}(u) \text{ is finite } \right\},
 \end{equation*} 
 and where $N \sim \mathcal{S}(\alpha;1)$ is independent of $X$, $0< \alpha <2$. 
 \end{abstract}

\section{Main Result}
According to the definition, $J_{\alpha}(X + \sqrt[\alpha]{\eta}N)$ is the derivative of the entropy with respect to the dispersion $\eta$ of the added stable variable. Therefore, the problem boils down to proving the differentiability of $h(X + \sqrt[\alpha]{\eta}N)$.

\begin{theorem}
Let $N \sim \mathcal{S}(\alpha;1)$ and let $X \in  \mathcal{L}$ independent of $N$. Then $h(X + \sqrt[\alpha]{\eta}N)$ is differentiable in $\eta > 0$.
\end{theorem}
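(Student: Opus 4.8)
The plan is to realize $Y_\eta := X + \sqrt[\alpha]{\eta}\,N$ through its density and then to differentiate the entropy by pushing the $\eta$-derivative inside the integral, following Barron's strategy with the stable kernel in place of the Gaussian one. Writing $f$ for the density of $N$ and $g_\eta$ for that of $\sqrt[\alpha]{\eta}\,N$, the scaling property of stable laws gives $g_\eta(z)=\eta^{-1/\alpha}f\!\left(\eta^{-1/\alpha}z\right)$, so the density of $Y_\eta$ is the convolution
\begin{equation*}
 p_\eta(y)=\int g_\eta(y-x)\,dF_X(x).
\end{equation*}
First I would record the regularity of $p_\eta$. Since the modulus of the characteristic function of $\sqrt[\alpha]{\eta}\,N$ equals $e^{-\eta|\omega|^{\alpha}}$, which decays faster than any power of $|\omega|$, Fourier inversion lets me differentiate under the integral arbitrarily often in $y$ and in $\eta$; hence $p_\eta$ is jointly smooth in $y$ and $\eta>0$, strictly positive, and
\begin{equation*}
 \partial_\eta p_\eta(y)=\int \partial_\eta g_\eta(y-x)\,dF_X(x).
\end{equation*}
I would also record the pointwise bounds that drive everything: from the standard tail asymptotics $f(x)=O\!\left(|x|^{-(1+\alpha)}\right)$ and $f'(x)=O\!\left(|x|^{-(2+\alpha)}\right)$ one gets $g_\eta,\ \partial_\eta g_\eta=O\!\left(|z|^{-(1+\alpha)}\right)$ as $|z|\to\infty$, locally uniformly in $\eta$, together with global boundedness of $g_\eta$ and $\partial_\eta g_\eta$.

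Next I would reduce the derivative of the entropy to a single integral. Formally,
\begin{equation*}
 \frac{d}{d\eta}\,h(Y_\eta)=-\frac{d}{d\eta}\int p_\eta\ln p_\eta\,dy =-\int \partial_\eta p_\eta\,\bigl(1+\ln p_\eta\bigr)\,dy =-\int \partial_\eta p_\eta\,\ln p_\eta\,dy,
\end{equation*}
where the constant term drops because $\int p_\eta\,dy\equiv 1$ forces $\int\partial_\eta p_\eta\,dy=0$. The whole theorem then rests on justifying this interchange, i.e.\ on producing an integrable function $\Psi$ with
\begin{equation*}
 \bigl|\partial_\eta p_\eta(y)\,\ln p_\eta(y)\bigr|\le \Psi(y)
\end{equation*}
for all $\eta$ in a neighborhood of the point at which I differentiate; applying dominated convergence to the difference quotient (via the mean value theorem in $\eta$) then delivers both the existence of the derivative and the displayed formula.

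The main obstacle is the construction of $\Psi$, and this is precisely where the hypothesis $X\in\mathcal{L}$ is consumed. The factor $\partial_\eta p_\eta$ is bounded and, by the tail estimate above, is $O\!\left(|y|^{-(1+\alpha)}\right)$ away from the origin; the delicate factor is $\ln p_\eta$, which can be large where $p_\eta$ is small. For an upper bound I would split the convolution at $|x|\le|y|/2$ and $|x|>|y|/2$ to obtain $p_\eta(y)\le C(1+|y|)^{-(1+\alpha)}+\|g_\eta\|_\infty\,P(|X|>|y|/2)$, and for a lower bound I would combine strict positivity on compacts with the stable tail asymptotics to get $p_\eta(y)\ge c\,(1+|y|)^{-(1+\alpha)}$ in the heavy-tail direction. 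These sandwich $|\ln p_\eta(y)|$ between a constant multiple of $\ln(1+|y|)$ and a contribution governed by the tail of $X$, yielding a bound of the form
\begin{equation*}
 \int \bigl|\partial_\eta p_\eta(y)\,\ln p_\eta(y)\bigr|\,dy \le C\left[\int \frac{\ln(1+|y|)}{(1+|y|)^{1+\alpha}}\,dy+\int \ln\!\bigl(1+|y|\bigr)\,dF_X(y)\right],
\end{equation*}
the first integral converging because $\alpha>0$ and the second being finite exactly by $X\in\mathcal{L}$. The remaining care is uniformity in $\eta$: every constant must be taken locally uniform over a compact $\eta$-neighborhood, which follows from the locally uniform tail bounds on $g_\eta,\partial_\eta g_\eta$ and from $\inf_{\eta}\|g_\eta\|_\infty>0$ there. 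I expect the genuinely technical work to lie in making the lower bound on $p_\eta$ honest across the whole line (including the possibly lighter tail when $N$ is skewed) and in controlling $\ln p_\eta$ through the transition region, rather than in the bulk estimate or in the routine Fourier-analytic smoothness.
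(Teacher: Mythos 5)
Your overall architecture is the same as the paper's: write the density of $Y_\eta$ as a convolution with the scaled stable kernel, reduce $\frac{d}{d\eta}h(Y_\eta)$ to $-\int \partial_\eta p_\eta(y)\ln p_\eta(y)\,dy$ using $\int \partial_\eta p_\eta\,dy=0$, and justify the interchange by a dominating function valid uniformly for $\eta$ in a compact neighborhood; your lower bound $p_\eta(y)\ge c\,(1+|y|)^{-(1+\alpha)}$ is obtained exactly as in the paper (restrict the convolution to a set of $X$-probability at least $1/2$ and invoke the stable tail), and your final displayed bound has the same two terms as the paper's. Your Fourier-inversion justification of $\partial_\eta p_\eta(y)=\int \partial_\eta g_\eta(y-x)\,dF_X(x)$ is a legitimate, arguably cleaner, alternative to the paper's Lemma~1, which instead constructs an explicit integrable dominating kernel $s_b$ (affine on a compact set, $O(|t|^{-(1+\alpha)})$ outside it) and applies the mean value theorem plus dominated convergence. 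But note that the paper reuses that kernel $s_b$ in the second half of its proof, and by skipping its construction you deprived yourself of the very object needed there. (Also, the skewness worry at the end is moot: the paper's $\mathcal{S}(\alpha;1)$ has characteristic function $e^{-|\omega|^{\alpha}}$, hence is symmetric and both tails are heavy.)

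The genuine gap is the pointwise claim that $\partial_\eta p_\eta(y)=O\bigl(|y|^{-(1+\alpha)}\bigr)$ away from the origin. That rate holds for the kernel $\partial_\eta g_\eta$, but it does not survive convolution with $dF_X$, because $X\in\mathcal{L}$ only constrains a logarithmic moment and so permits tails far heavier than those of $N$. Concretely, let $X$ be discrete with $\Pr(X=x_n)=c\,2^{-n}$ and $\ln x_n=(3/2)^n$; then $X\in\mathcal{L}$ since $\sum_n 2^{-n}(3/2)^n<\infty$, yet at $y=x_n$ the atom at $x_n$ contributes about $c\,2^{-n}\,|\partial_\eta g_\eta(0)|$ to $|\partial_\eta p_\eta(x_n)|$ (and $\partial_\eta g_\eta(0)=-\alpha^{-1}\eta^{-1-1/\alpha}f(0)\neq 0$, while the other atoms are too far away to cancel it), and $2^{-n}$ is vastly larger than $x_n^{-(1+\alpha)}=e^{-(1+\alpha)(3/2)^n}$. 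There is also a structural red flag: if your pointwise claim were true, your first integral $\int \ln(1+|y|)(1+|y|)^{-(1+\alpha)}\,dy$ alone would close the proof and the hypothesis $X\in\mathcal{L}$ would never be used; correspondingly, the second term $\int \ln(1+|y|)\,dF_X(y)$ in your display does not actually follow from the sandwich on $\ln p_\eta$ you describe (an upper bound on $p_\eta$ gives a \emph{lower} bound on $|\ln p_\eta|$ where $p_\eta<1$, which contributes nothing to domination). The repair is exactly the paper's: seek no pointwise rate for $\partial_\eta p_\eta$ at all, but keep the bound in convolution form $|\partial_\eta p_\eta(y)|\le \int s_b(y-x)\,dF_X(x)$ with $s_b$ integrable and $\int \ln(1+|z|)\,s_b(z)\,dz<\infty$, and then control $\int \ln(1+|y|)\int s_b(y-x)\,dF_X(x)\,dy$ by Fubini together with $\ln(1+|y|)\le \ln(1+|x|)+\ln(1+|y-x|)$. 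That Fubini step is precisely where $X\in\mathcal{L}$ is consumed, and it is the step your write-up is missing.
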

 First let $q_{\eta}(y) = \text{E}\left[p_{\eta}(y-X)\right]$ be the PDF of $Y = X + \sqrt[\alpha]{\eta}N$ where $p_{\eta}(\cdot)$ is the density of the alpha-stable variable with dispersion $\eta$. Note that since $p_{\eta}(\cdot)$ is bounded then so is $q_{\eta}(\cdot)$ and since $X \in \mathcal{L}$ then so is $Y$. Then $h(Y)$ is finite and is defined as
\begin{equation*}
h(Y) = -\int q_{\eta}(y) \ln q_{\eta}(y)\,dy.
\end{equation*}
We list and prove next two technical lemmas.
\begin{lemma}
\label{lemt1}
\label{firstlem}
\begin{equation*}
\frac{d}{d \eta} q_{\eta}(y) = \text{E}\left[\frac{d}{d \eta} p_{\eta}(y-X)\right]
\end{equation*}
\end{lemma}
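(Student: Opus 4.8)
The plan is to justify differentiating under the expectation by the dominated–convergence form of the Leibniz rule. Fix $y \in \mathbb{R}$ and a target point $\eta_{0} > 0$, and restrict attention to $\eta$ in a compact neighborhood $I = [\eta_{0} - \delta, \eta_{0} + \delta] \subset (0,\infty)$. Writing the expectation as an integral against the law of $X$,
\begin{equation*}
q_{\eta}(y) = \int p_{\eta}(y-x)\,dF_{X}(x),
\end{equation*}
it suffices to produce an $F_{X}$-integrable function $g$ such that $\bigl| \partial_{\eta} p_{\eta}(y-x) \bigr| \le g(x)$ for all $x$ and all $\eta \in I$; the classical theorem on differentiation under the integral sign then yields the claim directly.

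For the pointwise $\eta$-derivative I would use the Fourier-inversion representation of the stable density. Since the dispersion enters the characteristic function linearly in the exponent, one has $\psi_{\eta}(t) = e^{\eta \phi(t)}$ with $\operatorname{Re}\phi(t) = -c\,|t|^{\alpha}$ for some constant $c>0$, so that
\begin{equation*}
p_{\eta}(y) = \frac{1}{2\pi}\int_{-\infty}^{\infty} e^{-ity}\, e^{\eta \phi(t)}\,dt, \qquad \frac{\partial}{\partial \eta} p_{\eta}(y) = \frac{1}{2\pi}\int_{-\infty}^{\infty} e^{-ity}\, \phi(t)\, e^{\eta \phi(t)}\,dt.
\end{equation*}
The interchange inside this inner integral is itself justified by dominated convergence, because $\bigl|\phi(t)\,e^{\eta\phi(t)}\bigr| \le C\,|t|^{\alpha} e^{-c(\eta_{0} - \delta)|t|^{\alpha}}$ is integrable in $t$ and dominates the relevant difference quotients uniformly for $\eta \in I$. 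Note that only the \emph{real part} of $\phi$ enters this bound, so the argument is insensitive to any skewness of $N$.

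The key consequence is a bound that does not depend on the spatial argument: integrating the last display gives
\begin{equation*}
\Bigl| \frac{\partial}{\partial \eta} p_{\eta}(y-x) \Bigr| \le \frac{1}{2\pi}\int_{-\infty}^{\infty} \bigl| \phi(t) \bigr|\, e^{\eta \operatorname{Re}\phi(t)}\,dt \le M
\end{equation*}
for a finite constant $M = M(\alpha, \eta_{0}, \delta)$ valid for every $x$ and every $\eta \in I$. Thus the constant $g \equiv M$ serves as the dominating function, and since $F_{X}$ is a probability measure it is trivially $F_{X}$-integrable; the Leibniz rule then finishes the proof. As an alternative route to the same uniform bound, the scaling identity $p_{\eta}(y) = \eta^{-1/\alpha} p_{1}(\eta^{-1/\alpha} y)$ combined with the boundedness of $p_{1}$, $p_{1}'$, and $z \mapsto z\,p_{1}'(z)$ (the last using the $|z|^{-(\alpha+1)}$ tail decay of the stable density) produces a bound $|\partial_{\eta} p_{\eta}(y-x)| \le C'\,\eta^{-(1+\alpha)/\alpha}$ that is again independent of $x$.

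The main obstacle is precisely establishing this uniform-in-$x$ control of $\partial_{\eta} p_{\eta}$; once it is in hand the interchange is routine. The delicate point is keeping the dominating function integrable all the way to the boundary of $I$, which is why I insist on $\eta$ bounded away from $0$: as $\eta \downarrow 0$ the factor $e^{\eta \operatorname{Re}\phi(t)}$ loses its decay in $t$ (equivalently, the scaling prefactor $\eta^{-(1+\alpha)/\alpha}$ blows up), and no such uniform bound can survive. Since the lemma only claims differentiability for $\eta > 0$, restricting to the compact neighborhood $I$ of the interior point $\eta_{0}$ is harmless.
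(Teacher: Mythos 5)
Your proof is correct, but it reaches the dominating function by a genuinely different route than the paper. The paper works in the spatial domain: it differentiates the scaling identity $p_{\eta}(t) = \eta^{-1/\alpha}p_N\bigl(t\,\eta^{-1/\alpha}\bigr)$ and controls the two resulting terms separately, using unimodality (hence boundedness) of $p_N$ and the Fourier bound $|p_N'(u)| \leq \frac{1}{\pi\alpha}\Gamma\bigl(\frac{2}{\alpha}\bigr)$ on compact sets, together with the tail asymptotics $p_N(u) = O\bigl(|u|^{-(1+\alpha)}\bigr)$ and $p_N'(u) = O\bigl(|u|^{-(2+\alpha)}\bigr)$ at large arguments; the outcome is a dominating function $s_b(t)$, uniform over $b < \eta < 2b$, that is Lebesgue-integrable in $t$. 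You instead differentiate the Fourier inversion formula in $\eta$ directly, exploiting that the dispersion enters the characteristic exponent linearly, and obtain a \emph{constant} bound $M$ uniform in both the spatial argument and $\eta$ in a compact neighborhood; since $F_X$ is a probability measure, a constant is an admissible dominating function, and the interchange follows. For Lemma~\ref{lemt1} in isolation your argument is cleaner and needs strictly less machinery (no tail asymptotics of the stable density or of its derivative). What the paper's heavier construction buys is reuse downstream: the same $s_b$, precisely because it is integrable in $t$ (see~(\ref{deff})), is what makes the proof of Lemma~\ref{lemt2} work — there one must integrate $r_b(y) = \text{E}\left[s_b(y-X)\right]$ against $\ln(1+|y|)$ over $y$ and apply Fubini, and a constant bound like your $M$ could not play that role. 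Two small caveats on your side remarks: the boundedness of $z \mapsto z\,p_1'(z)$ requires tail decay of the \emph{derivative} $p_1'$ (which is $O\bigl(|z|^{-(\alpha+2)}\bigr)$), not merely the $|z|^{-(\alpha+1)}$ decay of the density itself; and the identity $\psi_{\eta}(t) = e^{\eta\phi(t)}$ for skewed stables fails when $\alpha = 1$ because of the logarithmic term in the characteristic exponent — both points are harmless here since the paper's $N$ is symmetric, but they would need care in any extension.
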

\begin{lemma}
\label{lemt2}
\begin{equation*}
\frac{d}{d \eta} h(X + \sqrt[\alpha]{\eta}N) =  -\int \frac{d}{d \eta} \left(q_{\eta}(y)\right)\,\ln q_{\eta}(y)\,dy
\end{equation*}
exists and is finite. 
\end{lemma}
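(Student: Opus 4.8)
The plan is to differentiate under the integral sign in
$$
h(Y) = -\int q_{\eta}(y)\ln q_{\eta}(y)\,dy,
$$
justifying the interchange by the Leibniz rule, i.e.\ by producing an integrable function that dominates the $\eta$-derivative of the integrand uniformly for $\eta$ in a neighborhood of any fixed $\eta_{0} > 0$. Writing $\dot{q}_{\eta} = \partial q_{\eta}/\partial \eta$, the integrand's derivative is
$$
\frac{\partial}{\partial\eta}\bigl[q_{\eta}(y)\ln q_{\eta}(y)\bigr] = \dot{q}_{\eta}(y)\bigl[\ln q_{\eta}(y) + 1\bigr],
$$
with $\dot{q}_{\eta}(y) = \text{E}\bigl[\partial_{\eta}p_{\eta}(y-X)\bigr]$ supplied by Lemma~\ref{lemt1}. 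Since $\int q_{\eta}(y)\,dy = 1$ for every $\eta$, differentiating this identity under the integral sign (legitimate by the uniform bound on $\partial_{\eta}p_{\eta}$ established below) gives $\int \dot{q}_{\eta}(y)\,dy = 0$, so the ``$+1$'' contribution vanishes and the derivative collapses to the asserted expression $-\int \dot{q}_{\eta}(y)\ln q_{\eta}(y)\,dy$. It therefore suffices to bound $|\dot{q}_{\eta}(y)\ln q_{\eta}(y)|$ by an integrable function and to verify that the resulting integral is finite.

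First I would obtain a two-sided control on the logarithmic factor. Boundedness of $p_{\eta}$ gives $q_{\eta}(y) \le M$, hence $\ln q_{\eta}(y) \le \ln M$. For a lower bound I restrict the expectation defining $q_{\eta}$ to the event $|X| \le R$ (which has positive probability for $R$ large) and use the polynomial tail of the stable law, $p_{\eta}(z)$ of order $|z|^{-(\alpha+1)}$ as $|z| \to \infty$; this yields $q_{\eta}(y) \ge c\,(1+|y|)^{-(\alpha+1)}$ and therefore
$$
|\ln q_{\eta}(y)| \le C\bigl(1 + \ln(1+|y|)\bigr),
$$
uniformly for $\eta$ near $\eta_{0}$. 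Next I bound the derivative factor using $|\dot{q}_{\eta}(y)| \le \text{E}\bigl[|\partial_{\eta}p_{\eta}(y-X)|\bigr]$, where $\partial_{\eta}p_{\eta}$ also decays like $|z|^{-(\alpha+1)}$ (differentiate the scaling relation $p_{\eta}(z) = \eta^{-1/\alpha}p_{1}(\eta^{-1/\alpha}z)$ and invoke the stable tail estimates).

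Combining the two factors, applying Fubini (all terms nonnegative), substituting $z = y - X$, and using the subadditivity $\ln(1+|z+X|) \le \ln(1+|z|) + \ln(1+|X|)$, the master integral splits as
$$
\int |\dot{q}_{\eta}(y)\ln q_{\eta}(y)|\,dy \le C_{1} + C_{2}\,\text{E}\bigl[\ln(1+|X|)\bigr],
$$
where $C_{1} = C\!\int |\partial_{\eta}p_{\eta}(z)|\bigl(1+\ln(1+|z|)\bigr)\,dz$ is finite because $|z|^{-(\alpha+1)}\ln(1+|z|)$ is integrable for every $0 < \alpha < 2$, and the second term is finite precisely because $X \in \mathcal{L}$. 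This simultaneously supplies the uniform integrable dominating function and establishes finiteness of the limiting integral, completing the interchange.

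The main obstacle I expect is not the bookkeeping above but securing the stable-density inputs in the exact form needed: the upper and lower tail estimates for $p_{\eta}$ and the $|z|^{-(\alpha+1)}$ decay of $\partial_{\eta}p_{\eta}$, together with their \emph{uniformity} over a neighborhood of $\eta_{0}$. These follow from the series/integral representations of $\mathcal{S}(\alpha;1)$ densities, but verifying them carefully — and checking that the restriction to $|X| \le R$ genuinely produces the claimed lower bound on $q_{\eta}$ at all large $|y|$ — is where the real care lies. Once these estimates are in hand, the condition $\int \ln(1+|u|)\,dF_{X}(u) < \infty$ defining $\mathcal{L}$ is exactly what closes the final estimate.
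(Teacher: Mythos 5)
Your proposal is correct and follows essentially the same route as the paper: the same product-rule decomposition with the $\int \partial_\eta q_\eta(y)\,dy = 0$ cancellation, the same logarithmic control of $|\ln q_\eta(y)|$ obtained by restricting $X$ to a bounded set of positive probability and invoking the $|t|^{-(\alpha+1)}$ stable tail, and the same Fubini-plus-subadditivity argument reducing everything to $\int \ln(1+|u|)\,dF_X(u) < \infty$. The only differences are presentational (you state the bound $|\ln q_\eta(y)| \le C\bigl(1+\ln(1+|y|)\bigr)$ globally rather than splitting into compact and tail regions as the paper does), not mathematical.
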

\begin{proof}
We start by proving lemma~\ref{lemt1}. The interchange holds whenever $|\frac{d}{d \eta} p_{\eta}(t)|$ is bounded uniformly by an integrable function in a neighbourhood of $\eta$ by virtue of the MVT and the Lebesgue DCT.  To prove boundedness, we start by evaluating the derivative. Since
\begin{equation*}
p_{\eta} (t) = \frac{1}{\sqrt[\alpha]{\eta}}p_N\left(\frac{t}{\sqrt[\alpha]{\eta}}\right), 
\end{equation*}
then
\begin{equation*}
\frac{d}{d \eta} p_{\eta}(t) = -\frac{1}{\alpha}\frac{1}{\eta^{1+\frac{1}{\alpha}}}p_N\left(\frac{t}{\sqrt[\alpha]{\eta}}\right) -\frac{1}{\alpha}\frac{t}{\eta^{1+\frac{2}{\alpha}}} \frac{d p_{N}}{d \eta} \left(\frac{t}{\sqrt[\alpha]{\eta}}\right),
\end{equation*}
 which gives 
 \begin{equation}
 \label{deriv}
 \left|\frac{d p_{\eta}}{d \eta} (t)\right| \leq \frac{1}{\alpha}\frac{1}{\eta^{1+\frac{1}{\alpha}}}p_N\left(\frac{t}{\sqrt[\alpha]{\eta}}\right) + \frac{1}{\alpha}\frac{|t|}{\eta^{1+\frac{2}{\alpha}}}\left|\frac{d p_N}{d u}\right|_{u= \frac{t}{\sqrt[\alpha]{\eta}}}.
 \end{equation}
 For the purpose of finding the uniform bound on the derivative, we define $b$ as a positive number chosen such that $b < \eta < 2b$. Concerning the first term of the bound in~(\ref{deriv}), we consider two separate ranges of the variable $t$ to find the uniform upperbound . On compact sets, we have
 \begin{equation}
 \label{firstcom}
 \frac{1}{\alpha}\frac{1}{\eta^{1+\frac{1}{\alpha}}}p_N\left(\frac{t}{\sqrt[\alpha]{\eta}}\right) \leq \frac{1}{\alpha} \frac{1}{b^{1+\frac{1}{\alpha}}}\max_{u \in \Reals} p_N(u)
 \end{equation}
 where the maximum exists since alpha-stable variables are unimodal~\cite{nolan:2012} and thus their PDF is upperbounded. As for large values of $|t|$, we use the fact that there exists some $k > 0$ such that $p_N(t) \leq k \frac{1}{|t|^{1+\alpha}}$~\cite{nolan:2012} which gives 
 \begin{equation}
 \label{firsth}
 \frac{1}{\alpha}\frac{1}{\eta^{1+\frac{1}{\alpha}}}p_N\left(\frac{t}{\sqrt[\alpha]{\eta}}\right) \leq \frac{k}{\alpha}\frac{1}{|t|^{1+\alpha}},
 \end{equation}  
 an integrable upperbound independent of $\eta$. Equations~(\ref{firstcom}) and~(\ref{firsth}) insures that the first term of the right-hand side (RHS) of equation~(\ref{deriv}) is uniformly upperbounded by an integrable function for $b < \eta < 2b$. When it comes to the second term of the RHS of~(\ref{deriv}), we have for $n \geq 0$ (see~\cite[p.183]{kolmo}) 
\begin{equation}
\label{derivexp}
\frac{d^n p_{N}}{d u^n}\left(u\right) = \frac{(-i)^n}{2 \pi} \int \omega^n \phi_{N}(\omega) e^{-i \omega u} \,d\omega,
\end{equation}
and
\begin{equation}
\label{uppcons}
\left|\frac{d^n p_N}{d u^n}\left(u\right)\right| \leq \frac{1}{\pi \alpha}\Gamma\left(\frac{n+1}{\alpha}\right) \end{equation}
where $\phi_{N}(\omega) = e^{- |\omega|^{\alpha}}$ is the characteristic function of $\mathcal{S}(\alpha;1)$.
Hence, on compact sets, equation~(\ref{uppcons}) gives a uniform integrable upperbound on the second term of the RHS of the form
\begin{equation}
\label{secondcom}
\frac{1}{\alpha}\frac{|t|}{\eta^{1+\frac{2}{\alpha}}}\left|\frac{d p_N}{d u}\right|_{u= \frac{t}{\sqrt[\alpha]{\eta}}} \leq \frac{1}{\pi \alpha^2}\frac{|t|}{b^{1+\frac{2}{\alpha}}}\Gamma\left(\frac{2}{\alpha}\right),
\end{equation}
which is integrable and independent of $\eta$. Therefore, we only consider next the integral term in equation~(\ref{derivexp}) at large values of $u$. To this end, we make use of the results proven in Appendix II in~\cite{Fahsarxiv}. The results of this appendix implies that $\frac{d^n p_{U}}{d u^n}\left(u\right)  = O\left(\frac{1}{|u|^{n + \alpha +1}}\right)$ when $\alpha \neq 1$, $|\beta| \neq 1$. When $\alpha =1$, the symmetric alpha-stable variable is Cauchy distributed and it is clear that $\frac{d^n p_{U}}{d u^n}\left(u\right) = \Theta\left(\frac{1}{|u|^{n+2}}\right)$. 
Since $N \sim \mathcal{S}(\alpha,1)$, then for $0 < \alpha < 2$
\begin{equation*}
\left|\frac{d^n p_{N}}{d u^n}\left(u\right)\right| = \frac{1}{2 \pi} \left|T_n(-u;0)\right| \leq \frac{\kappa_n}{|u|^{n+\alpha+1}}
\end{equation*} 
and
\begin{equation}
\label{secondh}
\frac{1}{\alpha}\frac{|t|}{\eta^{1+\frac{2}{\alpha}}}\left|\frac{d p_N}{d u}\right|_{u= \frac{t}{\sqrt[\alpha]{\eta}}} \leq \frac{1}{\alpha} \frac{\kappa_1}{|t|^{1+\alpha}}
\end{equation}
is uniformly bounded at large values of $|t|$ by an integrable function. Equations~(\ref{secondcom}) and~(\ref{secondh}) imply that the second term in the RHS of equation~(\ref{deriv}) is uniformly upperbounded by an integrable function for $b < \eta < 2b$. This proves Lemma~\ref{lemt1}.

When it comes to Lemma~\ref{lemt2}, we have the following:
\begin{align}
\frac{d}{d \eta} h(Y) &= -\int \frac{d}{d \eta} \left(q_{\eta}(y)\,\ln q_{\eta}(y)\right)\,dy \label{inter111}\\
&= -\int \frac{d q_{\eta}}{d \eta}(y)\,\ln q_{\eta}(y)\,dy -\int \frac{d q_{\eta}}{d \eta}(y) \,dy \nonumber\\
&= -\int \frac{d q_{\eta}}{d \eta}(y) \,\ln q_{\eta}(y)\,dy - \frac{d}{d \eta}\int q_{\eta}(y)\,dy \label{inter222}\\
&= -\int \frac{d q_{\eta}}{d \eta}(y) \,\ln q_{\eta}(y)\,dy. \label{final111}
\end{align}
Equation~(\ref{final111}) is true since $q_\eta(y)$ is a PDF and  integrates to $1$. Next, we start by justifying equation~(\ref{inter222}). In fact, 
\begin{eqnarray*}
\left|\frac{d q_{\eta}}{d \eta} (y)\right| &=& \left|\text{E}\left[\frac{d p_{\eta}}{d \eta} (y-X)\right]\right|\\
&\leq& \text{E}\left|\frac{d p_{\eta}}{d \eta} (y-X)\right|\\
&\leq& r_{b}(y),
\end{eqnarray*}
where the first equation is due to Lemma~\ref{lemt1} and the second is justified by the fact that the absolute value function is convex. When it comes to the last equation, it has been shown in the proof of Lemma~\ref{lemt1} that $\left|\frac{d  p_{\eta}}{d \eta}(t)\right|$ is  uniformly upperbounded in a neighbourhood of $\eta$ by an integrable function $s_b(t)$. Note that the upperbound can be written as follows by virtue of equations~(\ref{deriv}), (\ref{firstcom}), (\ref{firsth}), (\ref{secondcom}) and~(\ref{secondh}):
\begin{equation}
\label{deff}
 s_b(t) = \left\{ \begin{array}{ll}
       \displaystyle  A(b) + B(b)|t| & |t| \leq t_0\\ 
       \displaystyle  C\,p_N(t)& |t| \geq t_0,
      \end{array} \right.
  \end{equation} 
  where $A(b)$, $B(b)$, $C$ and  $t_0$ are some positive values chosen in order to write the bound. Then 
\begin{equation*}
\text{E}\left|\frac{d p_{\eta}}{d \eta} (y-X)\right| \leq \text{E}\left|s_b(y-X)\right| = r_b(y),
\end{equation*}
which is integrable since $s_b(t)$ is integrable and by using Fubini's theorem. 
This completes the justification of equation~(\ref{inter222}). As for equation~(\ref{inter111}), instead of finding a uniform integrable upperbound to $\frac{d}{d \eta} \left(q_{\eta}(y)\,\ln q_{\eta}(y)\right)$, an equivalent task is to find such one to $\frac{d  q_{\eta}(y)}{d \eta}\,\ln q_{\eta}(y)$ which we show next. 
Since $p_{N}(t) = \Theta \left( \frac{1}{|t|^{\alpha + 1}} \right)$ (see for example~\cite{nolan:2012}), there exist
positive $T$ and $K$ such that $p_{N} (t)$ is greater than $K \,
\frac{1}{|t|^{\alpha + 1}}$ for some $K$ whenever $|t| \geq T$. Now let
$y > 0$ be any scalar is large enough and define $\tilde{y} > 0$ such that $\text{Pr}(|X| \leq \tilde{y}) \geq \frac{1}{2}$. Then 
\begin{align*}
 q_{\eta}(y) = \, &  \frac{1}{\sqrt[\alpha]{\eta}} \int p_{N} \left(\frac{y-u}{\sqrt[\alpha]{\eta}}\right) \,dF_X(u)\\
    \geq \, &  \frac{1}{\sqrt[\alpha]{\eta}}\int\limits_{-\tilde{y}}^{+\tilde{y}} p_{N} \left(\frac{y-u}{\sqrt[\alpha]{\eta}}\right) \, dF_X(u)  \\
    \geq \; &  \frac{1}{2 \sqrt[\alpha]{\eta}} p_{N} \left(\frac{y+\tilde{y}}{\sqrt[\alpha]{\eta}}\right) \\
     \geq \; &  \frac{1}{2 \sqrt[\alpha]{2 b}} p_{N} \left(\frac{y+\tilde{y}}{\sqrt[\alpha]{b}}\right) \\
  \geq \; & \frac{b K}{2 \sqrt[\alpha]{2}|y + \tilde{y}|^{\alpha + 1}} \\
  \geq \;&\frac{b \tilde{K}}{|y|^{\alpha+1}},
\end{align*}
where $b < \eta < 2 b$ and $\tilde{K}$ is some positive constant.
A similar derivation may be carried for the case $y \leq - T$ large enough.
Now, since at large values of $|y|$, $q_{\eta}(y) \leq 1$, then $|\ln q_{\eta}(y)| \leq \ln \left(\frac{|y|^{\alpha+1}}{b \tilde{K}}\right)$. Furthermore since $q_{\eta}(y)$ is continuous and positive, then it achieves a positive minimum on compact subsets of $\Reals$. Let $y_0 >0$ be large enough, then on $|y| \leq y_0$, we have
\begin{align}
\left|\frac{d  q_{\eta}(y)}{d \eta}\,\ln q_{\eta}(y)\right| &\leq   \max_{y \in  \Reals} r_b(y) \left|\ln\min_{|y| \leq y_0}q_{\eta}(y)\right| \label{vlarge}\\
&\leq  \max_{y \in  \Reals} s_b(y) \left|\ln\min_{|y| \leq y_0}p_{\eta}(y)\right| \label{convrel}\\
&\leq \max_{y \in  \Reals} s_b(y) \left|\ln\frac{1}{\sqrt[\alpha]{\eta}} p_N\left(\frac{y_1}{\sqrt[\alpha]{\eta}}\right)\right| \nonumber\\
&\leq \max_{y \in  \Reals} s_b(y) \left|\ln\frac{1}{\sqrt[\alpha]{2b}} p_N\left(\frac{y_1}{\sqrt[\alpha]{b}}\right)\right| < \infty \nonumber
\end{align}
which is independent of $\eta$. We choose $y_0$ large enough in order to guarantee that $\min_{|y| \leq y_0}q_{\eta}(y) \leq 1$ and that $\max_{|y| \leq y_0} \left|\ln q_{\eta}(y)\right| \leq \left|\ln\min_{|y| \leq y_0}q_{\eta}(y)\right|$. This justifies equations~(\ref{vlarge}). The same reasoning applies to the justification of equation~(\ref{convrel}) by virtue of the fact that  $\min_{|y| \leq y_0}p_{\eta}(y) \leq \min_{|y| \leq y_0}q_{\eta}(y)$ since $q_{\eta}(y) = \text{E}\left[p_{\eta}(y-X)\right]$. Now for $|y| > y_0$, we have
\begin{equation*}
\left|\frac{d  q_{\eta}(y)}{d \eta}\,\ln q_{\eta}(y)\right| \leq  r_b(y) \left(\ln \frac{|y|^{\alpha+1}}{b \tilde{K}}\right) 
\end{equation*}
which is a uniform integrable upperbound. The integrability is justified since:
\begin{align}
&\int \ln\left(1+|y|\right)r_b(y) \,dy \\
&= \int \int \ln\left(1+|y|\right) s_b(y-x)\,dF_X(x)\,dy \nonumber\\
&= \int \int \ln\left(1+|y|\right) s_b(y-x)\,dy\,dF_X(x) \label{finter}\\
&\leq \int \int \left(\ln(1+|x|) + \ln(1+|y|)\right) s_b(y)\,dy\,dF_X(x)\nonumber\\
&=S_b  \int \ln(1+|x|) dF_X(x) + L_b \nonumber\\
&< \infty \label{ff}, 
\end{align} 
where 
\begin{equation*}
S_b = \int s_b(y)\,dy < \infty,
\end{equation*}
and \begin{equation*}
L_b = \int \ln(1+|y |)s_b(y)\,dy < \infty.
\end{equation*}
Note that $S_b$ and $L_b$ are finite by virtue of~(\ref{deff}). Equation~(\ref{finter}) is due to Fubini and equation~(\ref{ff}) is justified by the fact that $X \in \mathcal{L}$. By this, equation~(\ref{inter111}) is true and Lemma~\ref{lemt2} is proven.
\end{proof}

\bibliographystyle{IEEEtran}
\bibliography{paper}
\end{document}